\newcommand{\catname}[1]{{\normalfont\textbf{#1}}}
\newcommand{\emptylist}{( )}
\newcommand{\ghz}{\ket{\text{GHZ}}}
\newcommand{\specialcell}[2][c]{%
  \begin{tabular}[#1]{@{}l@{}}#2\end{tabular}}
\newcommand{\pauli}[2]{\sigma_{\MakeLowercase{#1}}^{#2}}
\title{Cohomology and the Algebraic Structure of Contextuality in Measurement Based Quantum Computation}
\author{Sivert Aasn\ae ss
\institute{Department of Computer Science,\\
University of Oxford
}
\email{sivert.aasnaess@cs.ox.ac.uk}
}
\begin{document}
\maketitle

\begin{abstract}
    Okay, Roberts, Bartlett and Raussendorf recently introduced a 
    new cohomological approach to 
    contextuality in measurement based
    quantum computing.
    We give an abstract description of their obstruction and the algebraic structure
    it exploits, using the sheaf theoretic framework of Abramsky and Brandenburger.
    At this level of generality we contrast their approach to the 
    \v{C}ech cohomology obstruction of Abramsky, Mansfield and Barbosa
    and give a direct proof that \v{C}ech cohomology is at least as powerful.
\end{abstract}

\section{Introduction}

\newcommand{\pmodel}{\sheaf{S}}
Contextuality is a fundamental feature of quantum mechanics
that has been shown to play a central role in certain
models of quantum computing \cite{Howard, CMBQC.5449R}. 
For instance, a result by Raussendorf shows that 
a measurement based quantum computer with mod 2 linear side processing
requires a strongly contextual resource to perform universal computation
\cite{CMBQC.5449R}.

The sheaf theoretic framework of Abramsky and Brandenburger 
describes contextuality using the powerful language of sheaf theory \cite{ABBR}.
One of the insights of this approach is that contextuality in a range of
examples is characterised by the non-vanishing of a cohomological obstruction
that is derived using  \v{C}ech cohomology \cite{CNLC, CCP, gio17}.

More recently Okay et al. described an
obstruction for contextuality in measurement based quantum computation (MBQC) that
is based on group cohomology \cite{TPCQM}.
While the \v{C}ech cohomology obstruction is well defined for any set of quantum
measurements, their obstruction exploits the algebraic
structure of the Pauli measurements used in MBQC. We give a more
abstract account of this approach using
the sheaf theoretic framework. We briefly state our results:

\begin{itemize}
    \item 
    Local (resp. global) value assignments in MBQC induce
    local (resp. global) trivialisations of a sequence 
    \begin{center}
    \begin{tikzcd}
    \zn{2} \arrow[r] &
    X \arrow[r] &
    X/{\zn{2}}
    \end{tikzcd}
    \end{center}
where $X$ is a commutative partial monoid encoding the compositional structure
of commuting measurements.
    \item Mermin's square and GHZ 
        have natural interpretations in terms of this sequence.
    \item Okay et al.'s obstruction can be defined as an obstruction to a 
    local trivialisation
    of a sequence of this form to extend globally.
    \item 
    We give a direct proof that the vanishing of the
    \v{C}ech cohomology obstruction implies the vanishing
    of Okay et al.'s obstruction. 
\end{itemize}

This paper is organised as follows. In Section 2 we review the sheaf theoretic
formulation of quantum contextuality, the \v{C}ech cohomology obstruction,
and the issue of completeness and generalised all-versus-nothing arguments. 
In Section 3 we derive Okay et al.'s obstruction
as a generalisation of the cohomological characterisation of trivial
group extensions. Finally, in section 4,
we apply this obstruction to contextuality and compare it to the \v{C}ech cohomology
obstruction.

\section{Preliminaries}
\paragraph{Sheaf theoretic formulation of contextuality.} 

In the sheaf theoretic approach to contextuality the type of an experiment is described by a 
\emph{measurement scenario} $(X, \mcvx, O)$, where
\begin{itemize}
    \item The set of \emph{measurements} $X$ is a discrete topological space.
    \item The measurement cover $\mcvx \subset \powset{X}$ is a cover of $X$ and 
    furthermore an \emph{anti-chain}
    ($V \subset C \in \mcvx \Rightarrow V \notin \mcvx$). 
    A subset $V \subset X$ is 
    \emph{compatible} if $V \subset C$ for some \emph{context} $C \in \mcvx$. 
    \item $O$ is the set of outcomes.
\end{itemize}
The \emph{event sheaf} $\sheaf{E}:X^\text{op} \to \catname{Set}::V \mapsto O^V$ 
assigns to a set of measurements 
the set of joint outcomes, or \emph{sections}, and restricts a section
$s \in \sheaf{E}(V)$ to a section $\res{s}{U} \in \sheaf{E}(U)$ for $U \subset V$
with function restriction.

The data describing a particular experiment of type $(X, \mcvx, O)$
is specified by an \emph{empirical model}.
Contextuality is often defined in terms of probabilities \cite{bell}. 
We will instead be
concerned with the stronger notion of \emph{possibilistic} 
contextuality \cite{Kochen1975, Mermin}.
A (possibilistic) \emph{empirical model} $\pmodel:(X, \mcvx, O)$ is a subpresheaf
$\pmodel \subset \sheaf{E}:X^\text{op} \to \catname{Set}$
satisfying the conditions
\begin{enumerate}
    \item  $\pmodel(C) \neq \emptyset$ for all $C \in \mcvx$.
    \item  $\pmodel$ is \emph{flasque beneath the cover}: 
        $U \subset V \subset C \in \mcvx \implies 
        \pmodel(U \subset V):\pmodel(V) \to \pmodel(U)$ is surjective.
    \item Every \emph{compatible family} induces a global section: 
    A family ${\{s_C \in \pmodel(C)\}}_{C \in \mcvx}$ is \emph{compatible} 
    if $\res{s_C}{C \cap C'} = \res{s_{C'}}{C \cap C'}$ for all $C,C' \in \mcvx$.
    We require that every compatible 
    family is the family of restrictions 
    of some global section.
\end{enumerate}
We say that $\pmodel:(X, \mcvx, O)$ is
\begin{itemize}
    \item \emph{logically contextual at $s \in \pmodel(C)$} 
    if there is no $g \in \pmodel(X)$ with $\res{g}{C} = s$.
    \item
    \emph{non-contextual} if $\pmodel$ is not logically contextual at any $s \in \pmodel(C)$.
    \item \emph{strongly contextual} if $\pmodel$ is logically contextual at every
    $s \in \pmodel(C)$, equivalently $\pmodel(X) = \emptyset$.
\end{itemize}
\paragraph{Quantum contextuality.} 
If $X$ is a set of Hermitian measurements then we define a measurement scenario
$(X, \mcvx, O)$, where each context $C \in \mcvx$ is a maximal 
subset of mutually commuting measurements and 
$O$ is the combined set of eigenvalues.

A value assignment $s:V \to O$, for a compatible set 
$V = \{M_1, M_2, \cdots, M_n\} \subset X$,
is \emph{consistent with quantum mechanics} if there exists
a state $\ket{\psi}$ such that the joint
outcome specified by $s$ is consistent with $\ket{\psi}$
according to the Born rule
\begin{align*}
    \norm{P_1P_2\cdots P_n\ket{\psi}}^2 \neq 0
\end{align*}
where we require that $s(M_i)$ is an eigenvalue
of $M_i$ and $P_i$ denotes the projector onto the corresponding eigenspace.

The \emph{state independent}, and \emph{state dependent} models 
$\pmodel_X, \pmodel_{X,\psi}:(X, \mcvx, O)$ 
are defined at any $V \subset C \in \mcvx$ \emph{below the cover}
as respectively
\[
    \pmodel_X(V) := 
    \{s:V \to O \mid 
    s \text{ is consistent with quantum mechanics} \}
    \]
and
\[
    \pmodel_{X,\psi}(V) := \{
        s:V \to O \mid s \text{ is consistent with $\ket{\psi}$} \}
     \]
and \emph{above the cover} by the condition that every compatible family
induces a global section. It can be shown that the requirement that 
$\pmodel_{X, \psi}$ and $\pmodel_X$
are
flasque beneath the cover is equivalent to the no-signalling principle \cite{ABBR}.

\begin{definition}
The Pauli $n$-group $P_n$ is the matrix group of $n$-fold tensor products of the Pauli matrices
\begin{align*}
\hspace{-2.0 cm}
    I := \begin{bmatrix}
    1 & 0 \\
    0 & 1 \\
    \end{bmatrix}\quad
    \pauli{X}{} := \begin{bmatrix}
        0 & 1 \\
        1 & 0 \\
    \end{bmatrix} \quad
    \pauli{Y}{} := \begin{bmatrix}
        0 & -i \\
        i & 0 \\
    \end{bmatrix}
    \quad
    \pauli{Z}{} := \begin{bmatrix}
        1 & 0 \\
        0 & -1 \\
    \end{bmatrix}
    \end{align*}
along with multiplicative factors $\pm 1, \pm i$. The elements of $P_n$ with
multiplicative factor $\pm 1$ specify $n$-qubit measurements with outcomes in $\{1, -1 \}$.
As is customary, we identify the groups 
$\{1, -1\} \cong \zn{2}$ and write $\pauli{}{i} \in P_n$, where 
$\pauli{}{} \in \{\pauli{X}{}, \pauli{Y}{},\pauli{Y}{}\}$,
for the  $n$-fold tensor product that is 
$\sigma$ at qubit $i$ and $I$ everywhere else.
\end{definition}

\begin{lemma}
    Let $X \subset P_n$
    be a set of measurements,
    $C \subset X$ a context and $s:C \to \zn{2}$ a value assignment that
    is consistent with quantum mechanics.
    \begin{enumerate}[label=\alph*)]
    \item  $s(M_1M_2) = s(M_1) \oplus s(M_2)$ for all $M_1,M_2 \in C$ such that
    $M_1M_2 \in X$.
    \item If $I \in X$ then $I \in C$ and $s(I) = 0$. Similarly if
    $-I \in X$ then $-I \in C$ and $s(-I) = 1$.
    \end{enumerate}
\end{lemma}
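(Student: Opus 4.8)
The plan is to reduce everything to the existence of a single simultaneous eigenstate and then perform elementary eigenvalue bookkeeping. First I would unpack the consistency hypothesis. Writing $C = \{M_1, \dots, M_k\}$ and letting $P_i$ be the eigenprojector of $M_i$ for the eigenvalue $(-1)^{s(M_i)}$, the elements of $C$ mutually commute (it is a context), so the $P_i$ commute and $P := P_1 \cdots P_k$ is the orthogonal projector onto the joint eigenspace $\{\ket{\phi} : M_i\ket{\phi} = (-1)^{s(M_i)}\ket{\phi} \text{ for all } i\}$. The Born-rule condition $\norm{P\ket{\psi}}^2 \neq 0$ then says exactly that this joint eigenspace is non-trivial, so I can fix a unit vector $\ket{\phi}$ in it satisfying $M\ket{\phi} = (-1)^{s(M)}\ket{\phi}$ for every $M \in C$.

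The second ingredient is a maximality argument showing that $s$ is actually defined on the operators appearing in the statement. For part a), since $M_1, M_2$ are Hermitian and commute, $M_1 M_2$ is Hermitian, and it commutes with every element of $C$ (because $M_1$ and $M_2$ do). Hence $C \cup \{M_1 M_2\}$ is a mutually commuting subset of $X$ containing $C$, and maximality of the context $C$ forces $M_1 M_2 \in C$. The same argument applied to $I$ and $-I$, which commute with everything, gives $I \in C$ (resp. $-I \in C$) whenever they lie in $X$, establishing the membership claims of part b).

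With these in place the computations are immediate. For part a), acting on $\ket{\phi}$ in two ways gives
\[
(-1)^{s(M_1 M_2)}\ket{\phi} = M_1 M_2 \ket{\phi} = (-1)^{s(M_2)}M_1\ket{\phi} = (-1)^{s(M_1)}(-1)^{s(M_2)}\ket{\phi},
\]
and comparing signs yields $s(M_1 M_2) = s(M_1) \oplus s(M_2)$ under the identification $\{1,-1\} \cong \zn{2}$. For part b), $I\ket{\phi} = \ket{\phi}$ forces $(-1)^{s(I)} = 1$, i.e. $s(I) = 0$, and $(-I)\ket{\phi} = -\ket{\phi}$ forces $s(-I) = 1$.

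I expect the only real subtlety to be the first step: one must be sure that a single vector $\ket{\phi}$ simultaneously realises all the specified eigenvalues, rather than merely each eigenvalue separately. This is where commutativity of the context is essential, since it is what makes $P_1 \cdots P_k$ a genuine projector onto the intersection of the eigenspaces and lets the non-vanishing Born condition deliver a common eigenvector. Once that vector is produced, the maximality argument and the sign bookkeeping are routine.
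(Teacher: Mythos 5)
Your proof is correct and follows the same route as the paper's: produce a single simultaneous eigenstate $\ket{\phi}$ realising the outcomes specified by $s$ and read off the eigenvalue of $M_1M_2$ (resp.\ $\pm I$) by acting on it. You simply fill in two steps the paper leaves implicit --- why the Born-rule condition yields a common eigenvector, and why maximality of the context forces $M_1M_2$, $I$, $-I$ to lie in $C$ --- which is a welcome addition rather than a deviation.
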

\begin{proof}
    b) is clear.
    For a) let
    $M_1,M_2 \in C$ and take 
    any state $\ket{\psi}$ such that 
    $M\ket{\psi} = s(M)\ket{\psi}$ for all $M \in C$. 
    \[
    s(M_1M_2)\ket{\psi} =  M_1M_2\ket{\psi}
    = M_1(s(M_2)\ket{\psi}) = s(M_1)s(M_2)\ket{\psi} \]
    Hence $s(M_1M_2) = s(M_1) \oplus s(M_2)$ 
    with the identification $\{-1, 1\} \cong \zn{2}$.
\end{proof}

\begin{exmp}[Mermin's square]
Let $\pmodel_X:(X, \mcvx, \zn{2})$ be the state independent model induced by the set of
measurements displayed in
\emph{Mermin's square}

\begin{center}
    \begin{tabular}{c  c  c  c  c }
    $\pauli{X}{1}$ & $\pauli{X}{2}$ & $\pauli{X}{1}\pauli{X}{2}$ & $I$ \\ &&&& \\
    $\pauli{Z}{2}$ & $\pauli{Z}{1}$ & $\pauli{Z}{1}\pauli{Z}{2}$ & $I$ \\&&&& \\
    $\pauli{X}{1}\pauli{Z}{2}$ & 
    $\pauli{Z}{1}\pauli{X}{2}$ & 
    $\pauli{Y}{1}\pauli{Y}{2}$ & $I$ \\&&&& \\
    
    $I$ & $I$ & $-I$ & \\
    \end{tabular}
\end{center}
Observe that the measurements displayed in any row or column
$M_1, M_2, M_3, M_4$ defines a context and furthermore 
satisfies $M_1 M_2 M_3 = M_4$, where $M_4 = \pm I$.
By Lemma 2.1
any local section $s \in \pmodel(C)$ therefore satisfies one of the following equations
\begin{align}
    \pauli{X}{1} \oplus \pauli{X}{2} \oplus \pauli{X}{1}\pauli{X}{2} &=  0\\ 
    \pauli{Z}{1} \oplus \pauli{Z}{2} \oplus \pauli{Z}{1}\pauli{Z}{2} &=  0\\ 
    \pauli{X}{1} \oplus \pauli{Z}{2} \oplus \pauli{X}{1}\pauli{Z}{2} &=  0\\ 
    \pauli{Z}{1} \oplus \pauli{X}{2} \oplus \pauli{Z}{1}\pauli{X}{2} &=  0\\ 
    \pauli{X}{1}\pauli{Z}{2} \oplus 
        \pauli{Z}{1}\pauli{X}{2} \oplus 
         \pauli{Y}{1}\pauli{Y}{2} &= 
            0\\ 
    \pauli{X}{1}\pauli{X}{2} \oplus 
        \pauli{Z}{1}\pauli{Z}{2} \oplus \pauli{Y}{1}\pauli{Y}{2} &= 
            1 
\end{align}
Any global section $g \in \pmodel_X(C)$ therefore simultaneously satisfies all equations.
However, these equations are mutually inconsistent. Summing together
all of the equations gives $0 = 1$, because each measurement appears in exactly
two equations.
$\pmodel_X$ is therefore strongly contextual.
\end{exmp}
\begin{exmp}[GHZ]
    Let $\pmodel_{X,\text{GHZ}}:(X,\mcvx, \zn{2})$ be the
    state dependent model induced by 
    $\ghz := (\ket{000} + \ket{111})/\sqrt{2}$ and
    $X := \bigotimes_{i=1}^3 \pm \{\pauli{X}{}, \pauli{Y}{}, I\}$.
   
    $\ghz$ is a $+1$-eigenstate of $\pauli{X}{1}\pauli{X}{2}\pauli{X}{3}$
    while it is a $-1$-eigenstate of 
        $\pauli{X}{1}\pauli{Y}{2}\pauli{Y}{3}$, 
        $\pauli{Y}{1}\pauli{X}{2}\pauli{Y}{3}$, and
        $\pauli{Y}{1}\pauli{Y}{2}\pauli{X}{3}$. 
    With the identification $\{-1,1\} \cong \zn{2}$ this means that 
    any global section $g \in \pmodel_{X, \text{GHZ}}(X)$ satisfies the following four equations.
    \[
        \pauli{X}{1}\pauli{X}{2}\pauli{X}{3} = 0, \quad
        \pauli{X}{1}\pauli{Y}{2}\pauli{Y}{3} = 1, \quad
        \pauli{Y}{1}\pauli{X}{2}\pauli{Y}{3} = 1, \quad
        \pauli{Y}{1}\pauli{Y}{2}\pauli{X}{3} = 1
        \]
    By Lemma 2.1 a) $g$ then also satisfies the equations
\begin{align}
    \pauli{X}{1} \oplus \pauli{X}{2} \oplus \pauli{X}{3} 
        &= 0 \\
    \pauli{X}{1} \oplus \pauli{Y}{2} \oplus \pauli{Y}{3} 
        &= 1 \\
    \pauli{Y}{1} \oplus \pauli{X}{2} \oplus \pauli{Y}{3} 
        &= 1 \\
    \pauli{Y}{1} \oplus \pauli{Y}{2} \oplus \pauli{X}{3} 
        &= 1
\end{align}
However, summing together equations (7)-(10) results in $0 = 1$. $\pmodel_{X, \text{GHZ}}$
is therefore strongly contextual.
\end{exmp}
\paragraph{\v{C}ech cohomology.}
Let $\mathcal{U}$ be an open cover of a topological space $X$ and $\sheaf{F}:X^\text{op} \to \catname{AbGrp}$ a presheaf of abelian groups.
The \emph{q-simplices} $\sheaf{N}^q(\mathcal{U})$ of the nerve of $\mathcal{U}$ is the set
of all tuples $\sigma = (U_0, U_1, \cdots, U_q) \in \mathcal{U}^{q+1}$ with non-trivial
overlap $\supp{\sigma} :=  \bigcap_{i=0}^{q} U_i \neq \emptyset$. The $q$-cochains 
$C^q(\mathcal{U}, \sheaf{F}) := \bigoplus_{U \in \mathcal{N}^q(\mathcal{U})} \sheaf{F}(\supp{U})$ is the abelian group of all assignments $\omega$ of a coefficient $\omega(\sigma) \in \sheaf{F}(\supp{\sigma})$ to each simplex $q \in \mathcal{N}^q(\mathcal{U})$ such that $\omega(\sigma) \neq 0$ for
at most finitely many $\sigma$.
Using the notation $\partial_i\sigma$ to denote the $q$ simplex obtained from
a $q+1$ simplex $\sigma$ by omitting the $i$'th element we define for each $q$ a coboundary map $d^q:C^q(\mathcal{U}, \sheaf{F}) \to C^{q+1}(\mathcal{U}, \sheaf{F})$ at each $q$-cochain $\omega$ and $q+1$-simplex $\sigma$ as
\begin{align*}
    d^q(\omega)(\sigma) := 
        \sum_{i=0}^q {(-1)}^{i} 
        \resmap{\supp{\partial_i \sigma}}{\supp{\sigma}}(\omega(\partial_i \sigma))
\end{align*}
where 
$\resmap{U}{V} := 
\sheaf{F}(U \subset V):\sheaf{F}(V) \to \sheaf{F}(U)$.
It can be verified that $d^{q+1} \circ d^q = 0$ and so
\begin{center}
    \begin{tikzcd}
    0 \arrow[r, "d^{-1} := 0" above] &
    C^0(\mathcal{U}, \sheaf{F}) \arrow[r, "d^0" above] &
    C^1(\mathcal{U}, \sheaf{F}) \arrow[r, "d^1" above] &
    C^2(\mathcal{U}, \sheaf{F}) \arrow[r, "d^2" above] &
    \cdots
    \end{tikzcd}
\end{center}
is a cochain complex. The \emph{$q$'th \v{C}ech cohomology group
$H^q(\mathcal{U}, \sheaf{F})$} is the quotient 
$Z^q(\mathcal{U}, \sheaf{F}) / B^q(\mathcal{U}, \sheaf{F})$ of the \emph{$q$-cocycles}
$Z^q(\mathcal{U}, \sheaf{F}) := \ker{d^q}$ over the \emph{$q$-coboundaries}
$B^q(\mathcal{U}, \sheaf{F}) := \im{d^{q-1}}$.

\paragraph{The cohomological obstruction.} 
Suppose now that $\pmodel:(X, \mcvx, O)$ is an empirical model and $s_0 \in \pmodel(C_0)$
is a local section. The cohomological obstruction to $s_0$ lifting to a global section
is defined in terms of the presheaf
$\sheaf{F} := F_\ints \circ \pmodel:X^\text{op} \to \catname{AbGrp}$ of formal linear combinations
of local sections, and two auxiliary presheafs 
\[
 \sheaf{F}_{\tilde{C_0}} :: U \mapsto \ker{\sheaf{F}(U \cap C_0 \subset U)}
 \quad
 \res{\sheaf{F}}{C_0} :: U \mapsto \sheaf{F}(C_0 \cap U)
 \]
At any $U \subset X$ these presheafs are related to $\sheaf{F}$
by a sequence
\begin{center}
    \begin{tikzcd}
    0 \arrow[r] &
    \sheaf{F}_{\tilde{C_0}}(U) \arrow[r, hookrightarrow] &
    \sheaf{F}(U) \arrow[r, "\resmap{U}{U \cap C_0}"] &
    \res{\sheaf{F}}{C_0}(U) \arrow[r] &
    0
    \end{tikzcd}
\end{center}
which in fact is exact, because $\sheaf{F}$ is flasque beneath the cover.
When lifted to the level of cochain complexes it therefore gives rise to
a short exact sequence
\begin{center}
    \begin{tikzcd}
    0 \arrow[r] &
    C^*(\mcvx, \sheaf{F}_{\tilde{C_0}}) \arrow[r] &
    C^*(\mcvx, \sheaf{F}) 
    \arrow[r] &
    C^*(\mcvx, \res{\sheaf{F}}{C_0}) \arrow[r] &
    0
    \end{tikzcd}
\end{center}
Using standard techniques from homological algebra this short exact sequence
of cochain complexes induces
a \emph{long exact sequence} of cohomology groups
\begin{center}
    \begin{tikzcd}[column sep=small]
    0 \arrow[r] &
    H^0(\mcvx, \sheaf{F}_{\tilde{C_0}}) \arrow[d, phantom, ""{coordinate, name=Z}] \arrow[r] &
    H^0(\mcvx, \sheaf{F}) \arrow[r] &
    H^0(\mcvx, \res{\sheaf{F}}{C_0}) \arrow[dll,
        "\gamma",
            rounded corners,
            to path={ -- ([xshift=2ex]\tikztostart.east)
            |- (Z) [near end]\tikztonodes
            -| ([xshift=-2ex]\tikztotarget.west)
        -- (\tikztotarget)}] \\
    &
    H^1(\mcvx, \sheaf{F}_{\tilde{C_0}}) \arrow[r] &
    H^1(\mcvx, \sheaf{F}) \arrow[r] &
    H^1(\mcvx, \res{\pmodel}{C_0}) \arrow[r] &
    \cdots
    \end{tikzcd}
\end{center}
where $\gamma$ is the connecting homomorphism \cite{weibel}.
Using the identification
$\sheaf{F}(C_0) \cong H^0(\mcvx, \res{\sheaf{F}}{C_0})$ we define
the \emph{obstruction for $s_0$} to extend to a global section to be
$\gamma(1 \cdot s_0) \in H^1(\mcvx, \sheaf{F}_{\tilde{C_0}})$.

\begin{lemma}[\cite{CNLC}]
If the cover $\mcvx$ is \emph{connected}\footnote{
i.e. All pairs $C,C' \in \mcvx$ are connected by a sequence
$C_0 = C, C_1, C_2, \cdots, C_{n-1}, C_n = C'$
with $C_i \cap C_{i+1} \neq \emptyset$.
This assumption is harmless because non-connected components
are completely independent in terms of contextuality.
Incidentally all of the scenarios we will consider are connected.
}
then $\gamma(1 \cdot s_0) = 0$
if and only if $1 \cdot s_0$ extends to a compatible family of
$F_\ints\pmodel$.
\end{lemma}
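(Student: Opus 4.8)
The plan is to read the statement straight off exactness of the long exact sequence, once the two zeroth cohomology groups appearing in it are identified concretely. Since $d^{-1} := 0$ we have $B^0 = 0$, so $H^0(\mcvx, \sheaf{F}) = Z^0(\mcvx, \sheaf{F}) = \ker d^0$. Unwinding $d^0$, a $0$-cochain $\{r_C\}_{C \in \mcvx}$ with $r_C \in \sheaf{F}(C)$ lies in $\ker d^0$ exactly when $\res{r_C}{C \cap C'} = \res{r_{C'}}{C \cap C'}$ for every $1$-simplex $(C,C')$, i.e. exactly when $\{r_C\}$ is a compatible family of $\sheaf{F} = F_\ints\pmodel$. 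So the first step is to record that $H^0(\mcvx, \sheaf{F})$ is the group of compatible families, and to check that the restriction-induced map $H^0(\mcvx, \sheaf{F}) \to H^0(\mcvx, \res{\sheaf{F}}{C_0})$ sends $\{r_C\}$ to $\{\res{r_C}{C_0 \cap C}\}_C$.

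The second and more delicate step is to justify the identification $\sheaf{F}(C_0) \cong H^0(\mcvx, \res{\sheaf{F}}{C_0})$ used to define the obstruction, and to see that under it the above map becomes $\{r_C\} \mapsto r_{C_0}$. I would take the candidate isomorphism $\sheaf{F}(C_0) \to H^0(\mcvx, \res{\sheaf{F}}{C_0})$ to be $f \mapsto \{\res{f}{C_0 \cap C}\}_C$, which visibly lands in $Z^0$, with candidate inverse $\{t_C\} \mapsto t_{C_0}$ (noting $t_{C_0} \in \res{\sheaf{F}}{C_0}(C_0) = \sheaf{F}(C_0)$). Injectivity is immediate since $C_0 \in \mcvx$, so $f$ is recovered as the $C_0$-component. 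Surjectivity requires $t_C = \res{t_{C_0}}{C_0 \cap C}$ for every $C$: when $C \cap C_0 \neq \emptyset$ this is exactly the cocycle condition on the $1$-simplex $(C_0, C)$, but when $C \cap C_0 = \emptyset$ the component $t_C$ lives in $\sheaf{F}(\emptyset) \cong \ints$ and is a priori unconstrained by $t_{C_0}$.

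This last case is the main obstacle, and it is precisely what connectedness resolves. The key observation is that for $\sheaf{F} = F_\ints\pmodel$ the restriction $\resmap{\emptyset}{U}:\sheaf{F}(U) \to \sheaf{F}(\emptyset) \cong \ints$ is the augmentation (total-coefficient) map, and that along any $1$-simplex $(D,D')$ the cocycle condition, further restricted to $\emptyset$, forces $\res{t_D}{\emptyset} = \res{t_{D'}}{\emptyset}$ by functoriality of restriction. Hence $C \mapsto \res{t_C}{\emptyset}$ is constant on each connected component of the cover; connectedness makes it globally constant and equal to $\res{t_{C_0}}{\emptyset}$. For $C$ disjoint from $C_0$ this yields $t_C = \res{t_C}{\emptyset} = \res{t_{C_0}}{\emptyset} = \res{t_{C_0}}{C_0 \cap C}$, completing surjectivity.

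With both identifications in hand the conclusion is immediate. Exactness of the long exact sequence at $H^0(\mcvx, \res{\sheaf{F}}{C_0})$ gives $\ker\gamma = \im\bigl(H^0(\mcvx, \sheaf{F}) \to H^0(\mcvx, \res{\sheaf{F}}{C_0})\bigr)$, which under the identifications is the set of $f \in \sheaf{F}(C_0)$ of the form $f = r_{C_0}$ for some compatible family $\{r_C\}$ of $\sheaf{F}$. Therefore $\gamma(1 \cdot s_0) = 0$ if and only if $1 \cdot s_0$ is the $C_0$-component of some compatible family, i.e. if and only if $1 \cdot s_0$ extends to a compatible family of $F_\ints\pmodel$, as claimed.
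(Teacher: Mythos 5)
Your proof is correct; note that the paper itself states this lemma with only a citation to the reference and gives no proof, and your argument is essentially the standard one from that source: read the claim off exactness of the long exact sequence at $H^0(\mcvx, \res{\sheaf{F}}{C_0})$ after identifying $H^0(\mcvx, \sheaf{F})$ with the group of compatible families and $H^0(\mcvx, \res{\sheaf{F}}{C_0})$ with $\sheaf{F}(C_0)$. You also correctly isolate the one point where connectedness is genuinely needed --- forcing the augmentation $C \mapsto \res{t_C}{\emptyset}$ to be globally constant so that the cocycle components over contexts disjoint from $C_0$ are determined by $t_{C_0}$ --- which is exactly the subtlety that motivated adding the connectedness hypothesis to this lemma.
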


The obstruction is clearly \emph{sound}. If $g \in \pmodel(X)$ then
$1 \cdot \res{g}{C_0}$ extends to the compatible family 
$\{1 \cdot \res{g}{C}\}_{C \in \mcvx}$. However, in general it is not 
\emph{complete}.
If $\gamma(1 \cdot s) = 0$, then $1 \cdot s$ extends 
to a compatible family
of $F_\ints \pmodel$, but this family might not correspond to any global
section of $\pmodel$. Such a false positive occurs for example
in the case of Hardy's paradox \cite{CCP,gio17, gio18}.

\paragraph{Generalised AvN models.} 
Examples 2.1 and 2.2 illustrate a type of contextuality proof that Mermin called 
`all versus nothing' \cite{Mermin2}.
These proofs can be understood as exhibiting an inconsistent set of equations
over $\zn{2}$ that is locally satisfied by the model.
The \v{C}ech cohomology obstruction is complete for the \emph{generalised AvN models},
the class of models that locally satisfies a system of inconsistent equations over any ring $R$ \cite{CCP}.
Let $R$ be a ring and suppose that  
$\pmodel:(X,\mcvx, R)$ is an empirical model.
An \emph{$R$-linear equation} $\phi$ at a context $C \in \mcvx$ 
is a formal sum
\[
    \sum_{x \in C} \mathbf{r}(x)x = a \]
where $\mathbf{r}:C \to R$ and $a \in R$.
A local section $s:C \to R$ \emph{satisfies} 
$\phi$, written $s \models \phi$,
if $\sum_{x \in C} \mathbf{r}(x) \cdot s(x) = a$,
where $\cdot$ denotes multiplication in $R$. 
The \emph{$R$-linear theory of $\pmodel$} is the set of all $R$-linear equations 
that are consistent with $\pmodel$.
\[
    \text{Th}_R(\pmodel) := 
        \bigcup_{C \in \mcvx} 
        \{\phi \text{ is an $R$-linear equation at $C$} \mid s \models \phi, \forall s \in \pmodel(C)\}
\]
\begin{definition}
$\pmodel$ is $\text{AvN}_R$ if its $R$-linear theory is \emph{inconsistent}.
i.e. there is no $s:X \to R$
such that $\res{s}{C} \models \phi$, for every context $C \in \mcvx$
and formula $\phi \in \text{Th}_R(\pmodel)$ at $C$.
\end{definition}
\begin{theorem}[\cite{CCP}]
If $\pmodel$ is $\text{AvN}_R$ then $\gamma(1 \cdot s) \neq 0$ for all
$C \in \mcvx$ and $s \in \pmodel(C)$.
\end{theorem}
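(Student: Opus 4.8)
The plan is to prove the contrapositive. Assuming $\gamma(1 \cdot s) = 0$ for some context $C \in \mcvx$ and section $s \in \pmodel(C)$, I will manufacture a single global assignment $t:X \to R$ that simultaneously satisfies every equation of $\text{Th}_R(\pmodel)$, directly contradicting the hypothesis that $\pmodel$ is $\text{AvN}_R$. Since the footnote permits us to assume the cover $\mcvx$ is connected, the lemma of \cite{CNLC} applies and tells us that $\gamma(1\cdot s)=0$ means $1 \cdot s$ extends to a compatible family ${\{r_C \in F_\ints\pmodel(C)\}}_{C \in \mcvx}$, with $r_C = 1\cdot s$ at the chosen context. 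Each $r_C$ is a formal integer combination $\sum_i n_i^C\, s_i^C$ of local sections $s_i^C \in \pmodel(C)$, and the whole argument will amount to reading off $t$ from these combinations.

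First I would record two elementary facts about this family. The \emph{augmentation} $\sum_i n_i^C$ is preserved by the restriction maps of $F_\ints\pmodel$, so whenever $C \cap C' \neq \emptyset$ the identity $\res{r_C}{C \cap C'} = \res{r_{C'}}{C \cap C'}$ forces $r_C$ and $r_{C'}$ to have equal augmentation; by connectedness the augmentation is then constant across $\mcvx$ and equal to that of $r_C = 1\cdot s$, namely $1$. Secondly, for each measurement $x$ and each compatible $V \ni x$, the rule $s \mapsto s(x) \in R$ extends by linearity to a group homomorphism $\mathrm{ev}_x : F_\ints\pmodel(V) \to R$ which commutes with restriction, in the sense that $\mathrm{ev}_x(r) = \mathrm{ev}_x(\res{r}{U})$ for $x \in U \subset V$. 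I would then set $t(x) := \mathrm{ev}_x(r_C)$ for any $C \ni x$; well-definedness is precisely the claim that $\mathrm{ev}_x(r_C) = \mathrm{ev}_x(r_{C'})$ when $x \in C \cap C'$, which follows by restricting both to $F_\ints\pmodel(C \cap C')$, invoking compatibility there, and using that $\mathrm{ev}_x$ is insensitive to restriction.

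It then remains to check that $t$ satisfies every $\phi \in \text{Th}_R(\pmodel)$. Fix such a $\phi$ at a context $C$, say $\sum_{x \in C}\mathbf{r}(x)x = a$, so that by definition $\sum_{x \in C}\mathbf{r}(x)\,s_i^C(x) = a$ for each section $s_i^C$ occurring in $r_C$. Substituting $t(x) = \sum_i n_i^C\, s_i^C(x)$ and interchanging the two finite sums, the ring computation collapses to $\bigl(\sum_i n_i^C\bigr)a = a$ by the augmentation-$1$ condition; the integer coefficients act on $R$ through its canonical $\ints$-module structure and commute with multiplication, so the manipulation is legitimate over an arbitrary ring $R$. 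Hence $t$ witnesses the consistency of $\text{Th}_R(\pmodel)$, which is the desired contradiction, so $\gamma(1 \cdot s) \neq 0$.

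The main obstacle I anticipate is not the closing computation but the bookkeeping that licenses it: checking that $\mathrm{ev}_x$ is genuinely well defined on the free abelian group $F_\ints\pmodel(V)$ and genuinely independent of the representing context, and propagating the augmentation-$1$ condition to \emph{every} context. The latter is exactly where connectedness of $\mcvx$ is indispensable, since on a disconnected cover the augmentation could vary between components and the evaluation at a context would fail to recover the constant $a$. Everything else is a routine application of the lemma of \cite{CNLC} together with the $R$-linearity of the equations in $\text{Th}_R(\pmodel)$.
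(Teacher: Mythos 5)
Your proposal is correct, and it is essentially the standard argument: the paper itself cites this theorem from \cite{CCP} without reproducing a proof, but the route you take --- extract a compatible family of formal $\ints$-linear combinations from $\gamma(1\cdot s)=0$, observe via connectedness that each member is an \emph{affine} combination (augmentation $1$), and collapse it by evaluation to a global assignment $t:X\to R$ satisfying the whole $R$-linear theory --- is exactly the argument of the cited reference, and the same collapsing technique the paper redeploys in its proof of Theorem 4.1. The bookkeeping you flag (well-definedness of $\mathrm{ev}_x$ and propagation of the augmentation) is handled correctly by compatibility of the family and preservation of augmentation under restriction.
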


\section{The group cohomology obstruction}
If $G$ is a commutative group and $H \leq G$ is a subgroup then
it is not always the case that $G \cong H \times G/H$.
More generally, 
if $H \leq K \leq G$ then a \emph{local trivialisation}
$\phi:K \cong  H \times K/H$
might not arise as a restriction of any
\emph{global trivialisation} $\phi':G \cong H \times G/H$.
In group cohomology the local trivialisations that can be
extended globally are characterised by a vanishing cohomological obstruction
\cite{brown, webb}.
The obstruction of Okay et al. can be understood
as a natural generalisation of this idea to the case where
$G$ is a commutative partial monoid.

\begin{definition} Let $A$ be a commutative group, $X$ a commutative partial monoid,
and $i:A \to X$ an injective homomorphism. 
Consider the sequence $A \xrightarrow{i} X \xrightarrow{\pi} X/A$, where
$\pi:X \to X/A$ is the canonical quotient of the group action 
$l_A:A \times X \to X::(a,x) \mapsto i(a) + x$. \footnote{
Observe that $i(a) + x$ is always defined, even when $X$ is partial,
because $0 + x = i(-a) + (i(a) + x)$
}
\begin{itemize}
    \item A \emph{left splitting} is a homomorphism $s:X \to A$ such that 
    $s \circ i = \id{A}$.
    \item A \emph{right splitting} is a homomorphism $h:X/A \to X$ such 
    that $\pi \circ h = \id{X/A}$.
    \item \specialcell{A \emph{trivialisation} is
    a homomorphism\\
    $\phi:X \to A \times X/A$ such that
    the following diagram commutes:}
        \begin{tikzcd}
        A \arrow[r, "i"] \arrow[dr, "\text{in}_1" below left]&
        X \arrow[r, "\pi"] \arrow[d, "\phi"] &
        X/A \\
        & A \times X/A \arrow[ur, "\text{proj}_2" below right] &
        \end{tikzcd}
        \\
Where $\times$ denotes the cartesian product, and 
$\text{in}_1$, $\text{proj}_2$ refers
to the associated inclusion and projection maps respectively.
\end{itemize}
\end{definition}

In this section we will show that under the assumption that $l_A$ is 
free 
the group cohomology obstruction can be generalised to an obstruction
for a \emph{local} trivialisation $\phi:C \to A \times C/A$, where
$i(A) \subset C \subset X$ is a submonoid, to extend globally. We first
translate the problem into one about right splittings.

\paragraph{The splitting lemma.}
It follows from a general fact about the cartesian product \cite{cpp-notes} that
the maps
\[
    \phi \mapsto \text{proj}_1 \circ \phi,
    \quad
    s \mapsto \langle s, \pi \rangle
\]
where $\langle s,\pi \rangle := x \mapsto (s(x),\pi(x))$,
defines a bijective
correspondence between left splittings and trivialisations.
Because this correspondence is compatible with restrictions, 
the problem of extending a trivialisation is equivalent to
the problem of extending a left splitting.
When $l_A$ is free something similar is true about right splittings.
\begin{lemma}
    Any trivialisation $\phi$ is in fact an isomorphism
\end{lemma}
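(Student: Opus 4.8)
The plan is to use the bijective correspondence between trivialisations and left splittings just established: any trivialisation is of the form $\phi = \langle s, \pi \rangle$, i.e. $\phi(x) = (s(x), \pi(x))$, where $s = \text{proj}_1 \circ \phi$ is a left splitting, so $s \circ i = \id{A}$. To show $\phi$ is an isomorphism of commutative partial monoids I will verify that it is a bijection whose inverse is again a homomorphism. Since $\phi$ is already a homomorphism, this amounts to three things: injectivity, surjectivity, and the statement that $\phi$ \emph{reflects} definedness of sums, i.e. that $\phi(x) + \phi(y)$ being defined forces $x + y$ to be defined.

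For injectivity, suppose $\phi(x) = \phi(y)$. Then $\pi(x) = \pi(y)$, so $x$ and $y$ lie in the same orbit of $l_A$ and $y = i(a) + x$ for some $a \in A$ (a sum which is always defined, by the footnote). Applying the homomorphism $s$ and using $s \circ i = \id{A}$ gives $s(y) = a + s(x)$, while $s(x) = s(y)$; hence $a = 0$ and $y = x$. For surjectivity, given $(a, \bar{x}) \in A \times X/A$ I would pick any $x_0 \in \pi^{-1}(\bar{x})$ and set $x := i(a - s(x_0)) + x_0$. Adding an element of $i(A)$ leaves the $\pi$-class unchanged and shifts $s$ additively, so a direct computation gives $\pi(x) = \bar{x}$ and $s(x) = a$, that is $\phi(x) = (a, \bar{x})$.

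The delicate point, and the step I expect to be the main obstacle, is that a bijective homomorphism of \emph{partial} monoids need not be an isomorphism, so I must separately show that $\phi$ reflects definedness. I would isolate as a sublemma the footnote's observation that $i(a) + z$ is defined for every $a \in A$ and $z \in X$ --- justified by applying the associativity axiom to $0 + z = (i(-a) + i(a)) + z$ --- together with its consequence that whenever $u + v$ is defined, so is $(i(a) + u) + v = i(a) + (u + v)$. Granting this, suppose $\phi(x) + \phi(y)$ is defined; since $A$ is total this is equivalent to $\pi(x) + \pi(y)$ being defined in the quotient, which by the pushforward structure along $\pi$ means that some representatives $x' = i(a) + x$ and $y' = i(b) + y$ satisfy $x' + y'$ defined. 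Writing $x = i(-a) + x'$ and $y = i(-b) + y'$ and applying the sublemma twice then shows $x + y$ is defined. This is exactly the place where partiality of $X$ could obstruct the argument, so the care lies in the associativity bookkeeping; once it goes through, $\phi$ is a bijective homomorphism that reflects definedness, and hence an isomorphism.
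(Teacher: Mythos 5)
Your proof is correct and takes essentially the same route as the paper: the paper defines the explicit inverse $\phi^{-1}(a,[x]) := x + i(a - \phi_1(x))$ with $\phi_1 = \text{proj}_1 \circ \phi$, which is exactly the witness you produce for surjectivity, and checks well-definedness via the splitting property just as you do. The only difference is that you explicitly address whether the inverse of a bijective homomorphism of \emph{partial} monoids reflects definedness of sums, a point the paper compresses into ``it is straightforward to verify that $\phi^{-1}$ is both a left and right inverse''; your extra care there is warranted and your associativity bookkeeping is sound.
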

\begin{proof}
    Write $\phi_1 := \text{proj}_1 \circ \phi$
    and define
    $\inv{\phi}(a, [x]) := x + i(a - \phi_1(x))$.
    This is well defined independently of the representative $x$ because
    $\phi_1$ is a splitting.
    Using
    $\phi = \langle \phi_1, \pi \rangle$ and the properties of left
    splittings it is straightforward to verify that $\inv{\phi}$ is both
    a left and right inverse to $\phi$.
\end{proof}

\begin{lemma}[Splitting lemma]
    Suppose that $l_A$ is free, $i(A) \subset C \subset X$ is a submonoid,
    and that $\phi:C \to C \times C/A$ is a trivialisation.
    The following conditions are then equivalent.
\begin{enumerate}
    \item There exists a left splitting $s:X \to A$ such that 
    $\res{s}{C} = \text{proj}_1 \circ \phi$.
    \item There exists a right splitting $h:X/A \to X$ such that
    $\res{h}{C/A} = \inv{\phi} \circ \text{in}_2$.
    \item There exists a trivialisation $\phi':X \to A \times X/A$
    such that $\res{\phi'}{C} = \phi$.
\end{enumerate}
\end{lemma}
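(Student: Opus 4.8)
The plan is to prove $(1)\Leftrightarrow(3)$ directly from the restriction-compatible correspondence between left splittings and trivialisations noted above, and then to close the cycle with $(3)\Rightarrow(2)$ and $(2)\Rightarrow(1)$. Freeness of $l_A$ is only needed for the last implication; everything else holds for an arbitrary trivialisation. Throughout write $\phi_1:=\text{proj}_1\comp\phi$.

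For $(1)\Leftrightarrow(3)$ I would invoke the bijection $s\mapsto\langle s,\pi\rangle$, $\phi\mapsto\text{proj}_1\comp\phi$. A left splitting $s$ with $\res{s}{C}=\phi_1$ gives the trivialisation $\phi':=\langle s,\pi\rangle$, and compatibility with restrictions yields $\res{\phi'}{C}=\langle\phi_1,\res{\pi}{C}\rangle=\phi$; conversely, given a trivialisation $\phi'$ extending $\phi$, the map $\text{proj}_1\comp\phi'$ recovers such an $s$.

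For $(3)\Rightarrow(2)$ I would use the preceding lemma that every trivialisation is an isomorphism, with $\inv{\phi'}(a,[x])=x+i(a-\phi'_1(x))$ where $\phi'_1:=\text{proj}_1\comp\phi'$. Given $\phi'$ extending $\phi$, put $h:=\inv{\phi'}\comp\text{in}_2$, so that $h([x])=\inv{\phi'}(0,[x])=x-i(\phi'_1(x))$. Since $i(\phi'_1(x))\in i(A)$ and $\pi$ sends $i(A)$ to the identity of $X/A$, we get $\pi\comp h=\id{X/A}$, so $h$ is a right splitting; and as $\phi'_1$ extends $\phi_1$, for $[x]\in C/A$ we have $h([x])=x-i(\phi_1(x))=(\inv{\phi}\comp\text{in}_2)([x])$, i.e. $\res{h}{C/A}=\inv{\phi}\comp\text{in}_2$.

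The main step is $(2)\Rightarrow(1)$, where freeness is essential. The fibres of $\pi$ are exactly the orbits of $l_A$, so freeness makes $l_A$ simply transitive on each fibre: since $\pi(h(\pi(x)))=\pi(x)$, for every $x\in X$ there is a \emph{unique} $a\in A$ with $x=i(a)+h(\pi(x))$, and I would define $s(x)$ to be this $a$. That $s$ is a homomorphism follows from uniqueness once one observes $x+y=i(s(x)+s(y))+h(\pi(x+y))$, using that $i$, $h$, and $\pi$ are homomorphisms; that $s\comp i=\id{A}$ follows from $h(\pi(i(a)))=h(0)=0$ together with injectivity of $i$; and for $x\in C$, substituting $h(\pi(x))=x-i(\phi_1(x))$ forces $i(s(x)-\phi_1(x))=0$, whence $\res{s}{C}=\phi_1$ again by injectivity of $i$. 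The hard part is precisely the recognition that freeness upgrades the (always transitive) action on a fibre to a \emph{simply} transitive one, which is what makes the pointwise decomposition, and hence $s$, well defined; without freeness $s(x)$ would fail to be single-valued.
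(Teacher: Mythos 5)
Your proof is correct and follows essentially the same route as the paper: the equivalence $(1)\Leftrightarrow(3)$ via the restriction-compatible bijection $s\mapsto\langle s,\pi\rangle$ is identical, and your construction of $s$ from $h$ in $(2)\Rightarrow(1)$, using freeness to solve $h(\pi(x))=x-i(s(x))$ uniquely, is exactly the paper's map $\Phi$; the only difference is that you arrange the implications as a cycle rather than proving $(2)\Leftrightarrow(3)$ directly. One cosmetic remark: in the last step it is the uniqueness of the decomposition (freeness), not injectivity of $i$ alone, that forces $s(x)=\phi_1(x)$ on $C$, though your argument already contains this.
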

\begin{proof}
    $1. \Leftrightarrow 3.$ $\phi \mapsto \text{proj}_1 \circ \phi$ is a bijection
    between trivialisations and left splittings and furthermore
    compatible with restrictions: 
        $\res{s'}{C} =  s  \iff \res{\phi'}{C} = \phi$ whenever $\phi \mapsto s$,
        and $\phi' \mapsto s'$.\\
    $2 \Leftrightarrow 3.$ We show that the map
        $\phi \mapsto \inv{\phi} \circ \text{in}_2$ from trivialisations
        to right splittings, has a left inverse.
        For any right splitting
    $h:X/A \to X$ let
    $\Phi(h) := \langle s, \pi \rangle$ where $s:X \to A$ is defined by the equation
    \[
        h(\pi(x)) = x - i(s(x)) \]
    which has a unique solution because $l_A$ is free. To see that $s$
    in fact is a splitting note first that
    $h(\pi(x+y)) = h(\pi(x)) + h(\pi(y))$ and hence
    \[
        x + y - i(s(x + y)) = x - i(s(x)) + y - i(s(y)) \]
    because $s$ is unique we therefore have $s(x+y) = s(x) + s(y)$. 
    For $s \circ i = \id{A}$ we have
    $h(\pi(i(a))) = h(\pi(0))$, therefore by uniqueness we have $s(i(a)) = a$.
    Finally,
    \begin{align*}
        (\inv{\phi} \circ \text{in}_2)(\pi(x)) &= \inv{\phi}(0, \pi(x)) \\
        &= x + i(0 - \phi_1(x)) \\
        &= x - i(\phi_1(x))
        \end{align*}
    hence by uniqueness $\Phi(\inv{\phi} \circ \text{in}_2) = \phi$,
    and so $\Phi$ is a left inverse of $\phi \mapsto \inv{\phi} \circ \text{in}_2$. 
    Because the map is defined pointwise it is clear that it
    is compatible with restrictions.
\end{proof}

\paragraph{An obstruction to global trivialisations.}

\begin{definition} We define the 
\emph{relative cohomology groups $H^*(M,N;G)$} 
of commutative
partial monoids $N \subset M$ with coefficients in an abelian group $G$.
For each $n \geq 0$ let $M_n$, and similarly $N_n$, be defined by
$M_0 = \{\emptylist\}$
and for $n > 0$
\[
	M_n := \{(m_1, m_2, \cdots, m_n) \in M^{n} \mid m_1 + m_2 + \cdots + m_n \text{ is defined} \}
	\]
The \emph{relative $n$-cochains}
$C^n(M,N;G) := \{f:M_n \to G \mid \res{f}{N_n} = 0\}$
is the abelian group of functions from $M_n$ to $G$ that vanish on $N_n$,
and the coboundary maps
\begin{center}
	\begin{tikzcd}
	0 = C^0(M,N;G) \arrow[r, "d^0" above]& 
	C^1(M,N;G) \arrow[r, "d^1" above]& 
	C^2(M,N;G) \arrow[r, "d^2" above]& 
	\cdots
	\end{tikzcd}
\end{center}
are given by
\begin{align*}
    d^n(f)(m_0,m_1, \cdots, m_n) := 
    &f(m_1, \cdots, m_n) \\
    + &\sum_{i=1}^n {(-1)}^i f(m_0, \cdots, m_{i-1}, m_i + m_{i+1}, m_{i+2}, \cdots, m_n) \\
    + &{(-1)}^{n}f(m_0, \cdots, m_{n-1})
    \end{align*}
$H^n(M,N;G) := Z^n(M,N;G) / B^n(M,N;G)$ is defined as the quotient of the
\emph{relative $n$-cocycles} $Z^n(M,N;G) := \ker{d^n}$ over the \emph{relative $n$-coboundaries} 
$B^n(M,N;G) := \im{d^{n-1}}$.
Note that $\res{f}{N_n} = 0 \implies \res{d^n(f)}{N_{n+1}} = 0$. It can also be
shown that $d^{n+1} \circ d^n = 0$. However, for our purpose it
is sufficient to check this for the maps
\begin{align}
    \label{defd2}
	d^2(f)(m_1,m_2,m_3) &= 
	    f(m_2,m_3) - f(m_1 + m_2,m_3) + f(m_1, m_2 + m_3) - f(m_1,m_2)\\
	d^1(f)(m_1,m_2) &= f(m_2) - f(m_1+m_2) + f(m_1)\\
	d^0 &= 0
\end{align}
which is easily done.
\end{definition}
Suppose now that $l_A$ is free and that $\phi:C \to A \times C/A$ is a 
trivialisation for some submonoid $i(A) \subset C \subset X$.
By the splitting lemma we can equivalently consider
the splitting $R(\phi) := \inv{\phi} \circ \text{in}_2:C/A \to C$.

\begin{definition}
Let $\eta:X/A \to X$ be any choice of representatives that coincides with
$R(\phi)$ on $C/A$.
The \emph{cohomological obstruction} to $\phi$ is the cohomology
class $[\beta] \in H^2(X/A,C/A;A)$ of $\beta$, where $\beta \in Z^2(X/A,C/A;A)$
is uniquely defined by 
    \begin{equation} \label{defbeta}
    	\eta(q_1 + q_2) = \eta(q_1) + \eta(q_2) + i(\beta(q_1,q_2))
    \end{equation}
    for all $q_1,q_2 \in X/A$ with $q_1 + q_2$ defined.
\end{definition}

\begin{lemma}
    The obstruction is well defined and independent of the choice of representatives.
\end{lemma}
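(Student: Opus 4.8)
The plan is to establish three things: that \eqref{defbeta} determines a unique $\beta$, that this $\beta$ is a genuine relative $2$-cocycle so that $[\beta]\in H^2(X/A,C/A;A)$ is meaningful, and that the resulting class is insensitive to the choice of $\eta$. The two structural facts doing the real work are the injectivity of $i$ and the freeness of $l_A$: freeness lets me cancel a common summand of $X$ and read off an equation in $A$, which is what converts each identity in the partial monoid into an identity among the $\beta$-coefficients.

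For well-definedness I would first note that, because $\pi$ is a homomorphism and $\eta$ picks representatives, both $\eta(q_1)+\eta(q_2)$ and $\eta(q_1+q_2)$ lie over $q_1+q_2$; they therefore belong to the same $A$-orbit, and freeness supplies a unique $\beta(q_1,q_2)\in A$ satisfying \eqref{defbeta}. Here I must dispatch the definedness subtlety proper to the partial setting: since $i(a)+x$ is always defined, the definedness of a sum $x_1+x_2$ depends only on the orbits $\pi(x_1),\pi(x_2)$, so $q_1+q_2$ defined in $X/A$ forces $\eta(q_1)+\eta(q_2)$ defined in $X$. That $\beta$ is a relative cochain, i.e.\ vanishes on $(C/A)_2$, follows because on $C/A$ the section $\eta$ equals $R(\phi)=\inv{\phi}\comp\text{in}_2$, which is a homomorphism since $\phi$ is an isomorphism by the preceding lemma; comparing $\eta(q_1+q_2)=\eta(q_1)+\eta(q_2)$ with \eqref{defbeta} gives $\beta(q_1,q_2)=0$ by injectivity of $i$.

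For the cocycle identity I would expand $\eta(q_1+q_2+q_3)$ two ways. Grouping as $(q_1+q_2)+q_3$ and applying \eqref{defbeta} twice gives $\eta(q_1)+\eta(q_2)+\eta(q_3)+i\bigl(\beta(q_1,q_2)+\beta(q_1+q_2,q_3)\bigr)$; grouping as $q_1+(q_2+q_3)$ gives the same three $\eta$-terms plus $i\bigl(\beta(q_2,q_3)+\beta(q_1,q_2+q_3)\bigr)$. Associativity and commutativity of $X$ identify the two expressions, and cancelling the common summand $\eta(q_1)+\eta(q_2)+\eta(q_3)$ by freeness yields precisely $d^2(\beta)(q_1,q_2,q_3)=0$ in the form \eqref{defd2}. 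Each sum occurring here is defined whenever $q_1+q_2+q_3$ is, again by the orbit argument.

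Finally, for independence of $\eta$, let $\eta'$ be another representative-section agreeing with $R(\phi)$ on $C/A$, with associated cocycle $\beta'$. For each $q$ the elements $\eta(q),\eta'(q)$ lie over the same class, so freeness gives a unique $\alpha(q)\in A$ with $\eta'(q)=\eta(q)+i(\alpha(q))$; since $\eta,\eta'$ agree on $C/A$ the function $\alpha$ vanishes there, so $\alpha\in C^1(X/A,C/A;A)$. Substituting $\eta'=\eta+i\comp\alpha$ into \eqref{defbeta} for $\beta'$ and cancelling by freeness as above produces $\beta'(q_1,q_2)-\beta(q_1,q_2)=\alpha(q_1+q_2)-\alpha(q_1)-\alpha(q_2)=-d^1(\alpha)(q_1,q_2)$, whence $\beta'-\beta=d^1(-\alpha)\in B^2(X/A,C/A;A)$ and $[\beta]=[\beta']$. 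I expect the principal obstacle to be the bookkeeping of partiality rather than the algebra: the genuinely new content beyond the classical abelian-extension computation is checking that every sum written down is defined, which in each instance reduces to the invariance of definedness under the $A$-action. Once that is in place, the cocycle and coboundary manipulations are the standard ones.
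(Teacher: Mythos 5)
Your proof is correct and takes essentially the same route as the paper's: uniqueness and the relative-cochain condition follow from freeness of $l_A$ and the fact that $\eta$ restricts to the homomorphism $R(\phi)$ on $C/A$, the cocycle identity comes from expanding $\eta(q_1+q_2+q_3)$ in two ways and cancelling by freeness, and independence of representatives comes from substituting $\eta' = \eta + i \circ \alpha$ and reading off $\beta' - \beta$ as a coboundary. Your extra care about the existence of $\beta$ (same $A$-orbit) and about definedness of sums in the partial setting only makes explicit points the paper leaves implicit.
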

\begin{proof}
    First note that $\beta$ is unique because $l_A$ is free, and a relative cochain
    because $\res{\eta}{C/A} = R(\phi)$ is a homomorphism.
    Next,
    to show that $\beta$ is a cocycle we
    use~\eqref{defbeta} and associativity to 
    expand $\eta(q_0 + q_1 + q_2)$
    as both
	\[
		\eta(q_1) + \eta(q_2) + \eta(q_3) + 
		    i(\beta(q_1, q_2+q_3) + \beta(q_2, q_3)) 
    \]
	and
	\[
		\eta(q_1) + \eta(q_2) + \eta(q_3) + 
		    i(\beta(q_1 + q_2, q_3) + \beta(q_1, q_2)) \]
	Because these terms are equal and $l_A$ is free
	\[
	\beta(q_1, q_2 + q_3) + \beta(q_2,q_3) = \beta(q_1 + q_2, q_3) + \beta(q_1,q_2)
	\]
	Comparing this to~\eqref{defd2} gives $d^2(\beta) = 0$, as required.
	Finally, to see that $[\beta]$ is independent of the choice of representatives
	suppose that we instead chose
	$\eta' := \eta + i \circ \gamma$, for some
	$\gamma \in C^1(X/A,C/A;A)$, and
	similarly defined $\beta'$. Expanding~\eqref{defbeta} in the case of
	$\eta'$ in terms of $\eta$ and $i \circ \gamma$ gives
	 \[
	    \eta(q_1 + q_2) = \eta(q_1) + \eta(q_2) + i(\beta'(q_1,q_2) +
	    \gamma(q_1) - \gamma(q_1+q_2) + \gamma(q_2)) \]
	By uniqueness we therefore have $\beta = \beta' + d^1(\gamma)$
	and hence $[\beta] = [\beta']$.
\end{proof}
\begin{theorem}
    The following conditions are equivalent.
    \begin{enumerate}
        \item There exists a trivialisation $\phi':X \to A \times X/A$ such
        that $\res{\phi'}{C} = \phi$.
        \item $[\beta] = 0$
    \end{enumerate}
\end{theorem}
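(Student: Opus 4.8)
The plan is to reduce both implications to statements about right splittings by means of the splitting lemma, and then to observe that a choice of representatives is a homomorphism precisely when its associated cocycle vanishes. By the splitting lemma (applied to $C$), condition~1 is equivalent to the existence of a right splitting $h:X/A \to X$ with $\res{h}{C/A} = R(\phi)$. Since a right splitting is nothing but a homomorphism $h$ satisfying $\pi \circ h = \id{X/A}$, i.e. a homomorphic choice of representatives, the content of the theorem is to decide when the representative $\eta$ appearing in the definition of the obstruction can be adjusted into a homomorphism without disturbing its values on $C/A$.

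For $1 \Rightarrow 2$, I would feed the right splitting $h$ produced above into the definition of the obstruction, using $h$ itself as the choice of representatives. This is permitted because $\res{h}{C/A} = R(\phi)$, and by the preceding lemma the resulting class $[\beta]$ agrees with the one computed from the canonical $\eta$. Substituting the homomorphism $h$ into the defining equation~\eqref{defbeta} gives $i(\beta(q_1,q_2)) = 0$ for all composable $q_1, q_2$, and injectivity of $i$ forces $\beta = 0$, hence $[\beta] = 0$.

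For $2 \Rightarrow 1$, suppose $[\beta] = 0$, so that $\beta = d^1(\gamma)$ for some relative cochain $\gamma \in C^1(X/A, C/A; A)$. Set $\eta' := \eta + i \circ \gamma$. The computation already carried out in the proof that the obstruction is well defined shows that the cocycle attached to $\eta'$ is $\beta' = \beta - d^1(\gamma) = 0$, so~\eqref{defbeta} collapses to $\eta'(q_1 + q_2) = \eta'(q_1) + \eta'(q_2)$ and $\eta'$ is a homomorphism. It remains a section of $\pi$ because $\pi \circ i$ is trivial, and it still restricts to $R(\phi)$ on $C/A$ exactly because $\gamma$ vanishes on $(C/A)_1$. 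Thus $\eta'$ is a right splitting extending $R(\phi)$, and the splitting lemma converts it into the sought trivialisation $\phi'$ extending $\phi$.

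The one step requiring care is the bookkeeping in $2 \Rightarrow 1$: it is the relative character of the cohomology---the vanishing of $\gamma$ on $C/A$---that guarantees the corrected representative $\eta'$ still equals $R(\phi)$ on $C/A$, which is what makes the splitting lemma applicable. Everything else is a direct dictionary between homomorphic representatives and vanishing cocycles, with freeness of $l_A$ used only to keep $\beta$ (and hence $\beta'$) well defined.
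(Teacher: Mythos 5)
Your proof is correct and follows essentially the same route as the paper: both reduce condition~1 to the existence of a right splitting extending $R(\phi)$ via the splitting lemma, and both identify such splittings with corrections $\eta + i \circ \gamma$ by a relative cochain, so that their existence is equivalent to $\beta$ being the coboundary $d^1(\gamma)$. The only cosmetic difference is that the paper writes this as a single chain of equivalences while you split it into two implications (reusing the change-of-representatives computation from the well-definedness lemma for the forward direction); also, the step forcing $\beta = 0$ when $h$ is a homomorphism really rests on freeness of $l_A$ (uniqueness of $\beta$) rather than injectivity of $i$ alone, but that is a minor point you effectively acknowledge.
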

\begin{proof}
$1. \iff$ There exists a right splitting $h:X/A \to X$
such that $\res{h}{C/A} = R(\phi)$.\\
$\iff$ There exists $\gamma \in C^1(X/A,C/A;A)$ such that $\eta + i \circ \gamma$
is a homomorphism.\\
$\iff$ There exists $\gamma \in C^1(X/A,C/A;A)$ such that for all
$q_1, q_2 \in M/A$ with $q_1 + q_2$ defined
\[
	\eta(q_1 + q_2) = 
	\eta(q_1) + \eta(q_2) + i(\gamma(q_1) - \gamma(q_1 + q_2) + \gamma(q_2))
	\]
$\iff$ There exists $\gamma \in C^1(X/A,C/A;A)$ such that
$\beta = d(\gamma)$\\
$\iff$ $[\beta] = 0 \in H^2(X/A,C/A;A)$.
\end{proof}

\section{The group cohomological approach to contextuality}
Suppose that 
$X \subset \bigotimes_{i=1}^n \pm \{\pauli{X}{},\pauli{Y}{},\pauli{Z}{},I\}$ 
is a set of Pauli measurements satisfying the two conditions
\begin{enumerate}
    \item $\{I,-I\} \subset X$.
    \item $M_1, M_2 \in X$ and $M_1M_2=M_2M_1 \implies M_1M_2 \in X$.
\end{enumerate}
In this case matrix multiplication gives each context 
$C \in \mcvx$ the structure of a commutative monoid
and the 
embedding $i:\zn{2} \to C::k \mapsto {(-1)}^k I$ induces a sequence
\begin{center}
    \begin{tikzcd}
    \zn{2} \arrow[r, "i"] &
    C \arrow[r, "\pi"] &
    C/\zn{2}
   \end{tikzcd} 
\end{center}
By Lemma 2.1 every
$s:C \to \zn{2}$ that is consistent with quantum mechanics is a left splitting. 
It follows
that both the state independent and state dependent models 
$\pmodel_X, \pmodel_{X,\psi}:(X, \mcvx, \zn{2})$ are instances
of the following definition.

\begin{definition}
In this section we will assume that we are working with an
empirical model
$\pmodel:(X, \mcvx, A)$ with the additional structure:
\begin{enumerate}
	\item The set of outcomes is a commutative group $(A, +_A, 0_A)$. 
	\item Each $C \in \mcvx$ is a commutative 
		monoid $(C, +_C, 0_C)$ and the monoid structures on different contexts are
		compatible. For all $C,C' \in \mcvx$:
		\begin{enumerate}
		    \item $0_C = 0_{C'}$.
		    \item $x,y \in C \cap C' \implies x +_C y = x +_{C'} y$.
		\end{enumerate}
	\item We are given an embedding $i:A \to \bigcap_{C \in \mcvx} C$ such that
	for each $C \in \mcvx$ the action
	$A \times C \to C::(a,x) \mapsto i(a) +_C\ x$ is free.
	\item Every local section $s \in \pmodel(C)$ is a left splitting
	of the sequence
        \begin{tikzcd}
        A \arrow[r, "i"] &
        C \arrow[r, "\pi"] &
        C/A
        \end{tikzcd}.
\end{enumerate}
\end{definition}
\paragraph{Group cohomology} 
The monoid structures on different contexts are compatible
and therefore ``glue together'' to define a 
commutative \emph{partial} monoid $(X, +, 0)$ whose maximal
submonoids correspond to the contexts. We consider the sequence
\begin{equation} \label{global}
    \begin{tikzcd}
    A \arrow[r, "i"] &
    X \arrow[r, "\pi"] &
    X/A
    \end{tikzcd}
\end{equation}
induced by $i:A \to \bigcap_{C \in \mcvx} C$ and note that the
action $l_A:A \times X \to X::(a,x) \mapsto i(a) + x$ is free. 
Suppose now that $C \in \mcvx$ is a particular context and $s \in \pmodel(C)$
a local section. Because $s$ is a splitting
it induces a \emph{local} trivialisation
\begin{center}
    \begin{tikzcd}
    A \arrow[r, "i"] \arrow[dr, "\text{in}_1" below left]&
    C \arrow[r, "\pi"] \arrow[d, "{\langle s,\pi \rangle}"] &
    C/A  \\
    & A \times C/A \arrow[ur, "\text{proj}_2" below right] &
    \end{tikzcd}
\end{center}
of sequence~\eqref{global}.
\begin{definition}
    $[\beta_s] \in H^2(X/A,C/A;A)$ is the
    cohomological obstruction to the existence of a trivialisation
    of sequence~\eqref{global} that extends 
    $\langle s,\pi \rangle:C \to A \times C/A$.
\end{definition}

The obstruction is clearly \emph{sound}. A global section
$g \in \pmodel(X)$ is a splitting because it's restriction to every context
is a splitting. If furthermore $\res{g}{C} = s$ then $\langle s, \pi \rangle$
extends to $\langle g, \pi \rangle: X \to A \times X/A$, by
Theorem 3.1 we therefore have $[\beta_s] = 0$.

\paragraph{Proofs of contextuality.}
Although the obstruction is sound, it is not in general \emph{complete}.
False positives can arise in the form of global extensions $\langle g, \pi \rangle$ of
$\langle s, \pi \rangle$ that correspond to splittings $g \notin \pmodel(X)$ that are not
allowed by $\pmodel$.
For this purpose Okay et al. introduced `topological' versions of Mermin's square and GHZ.
These proofs can be understood as showing that there are no false positives in the form
of right splittings. We note however, 
that the original proofs almost exactly spells out that there are 
no false positives in the form of left splittings.

\begin{exmp}[Mermin's square]
Let $X \subset P_2$ be any set of Pauli measurements that is closed under products
of commuting measurements
and contains the measurements displayed in Mermin's square.
We consider the state independent model 
$\pmodel_X:(X,\mcvx, \zn{2})$ which in this case satisfies Definition 4.1.

Observe that equations (1)-(6) induced by Mermin's square all can be rearranged
to be on the form
\[
    M_1 \oplus M_2 =  M_1 M_2 \]
for $M_1,M_2 \in X$ with $M_1M_2=M_2M_1$. That the equations
are mutually inconsistent therefore literally says that there is no
global left splitting. We therefore have $[\beta_s] \neq 0$ for every local section $s$ of $\pmodel$.
\end{exmp}

\begin{lemma}
    Suppose that $X \subset P_n$ is a set of Pauli measurements that contains the identity
    and is closed under commuting products. For any state $\ket{\psi}$
    the set of measurements
    \[
        X_\psi := \{M \in X \mid M\ket{\psi} = \pm \ket{\psi} \}
        \]
    whose outcome is uniquely determined by $\ket{\psi}$ is a submonoid
    of $X$. 
\end{lemma}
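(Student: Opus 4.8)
The plan is to verify directly the two conditions defining a submonoid of the partial monoid $X$: that $X_\psi$ contains the identity $I$, and that it is closed under the partial product wherever that product is defined. Recall that the operation on $X$ is matrix multiplication, and that a commutative product $M_1 M_2$ is defined precisely when $M_1$ and $M_2$ commute; in that case the hypothesis that $X$ is closed under commuting products guarantees $M_1 M_2 \in X$, so that $M_1 M_2$ is even a candidate for membership in $X_\psi$. Since we only ever multiply commuting operators, the overall multiplicative factor stays in $\{1,-1\}$, so every element of $X$ is Hermitian with eigenvalues in $\{1,-1\}$ and the condition $M\ket{\psi} = \pm\ket{\psi}$ is exactly the statement that $\ket{\psi}$ is an eigenstate of $M$.

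For the identity, $I \in X$ by assumption and $I\ket{\psi} = \ket{\psi}$, so $\ket{\psi}$ is a $+1$-eigenstate of $I$ and hence $I \in X_\psi$. For closure, I would take $M_1, M_2 \in X_\psi$ with $M_1 M_2 = M_2 M_1$, so that $M_1 M_2$ is defined and, by closure under commuting products, lies in $X$. Writing $M_j\ket{\psi} = \epsilon_j\ket{\psi}$ with $\epsilon_j \in \{1,-1\}$, a one-line computation gives
\[
(M_1 M_2)\ket{\psi} = M_1(\epsilon_2\ket{\psi}) = \epsilon_1\epsilon_2\ket{\psi}.
\]
Because $\{1,-1\}$ is closed under multiplication, the eigenvalue $\epsilon_1\epsilon_2$ is again $\pm 1$, so $M_1 M_2 \in X$ has $\ket{\psi}$ as a $\pm 1$-eigenstate, i.e. $M_1 M_2 \in X_\psi$. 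This establishes closure.

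The only point requiring care is interpreting the closure condition correctly in the partial setting: we are obliged to show $M_1 M_2 \in X_\psi$ \emph{only} when $M_1 M_2$ is already defined in $X$, which for Pauli operators means exactly when $M_1$ and $M_2$ commute, and there the closure-under-commuting-products hypothesis supplies $M_1 M_2 \in X$. Once this is recognised there is no substantive obstacle; the content of the lemma is the observation that $\{1,-1\}$ is a multiplicative group, so the eigenvalue data of the two factors compose, and the stabiliser-like set $X_\psi$ inherits the partial monoid structure of $X$.
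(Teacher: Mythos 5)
There is a genuine gap here, and it is exactly at the point you set aside as ``the only point requiring care.'' In this paper a \emph{submonoid} of the partial monoid $X$ is a subset on which the operation is \emph{totally} defined --- the contexts are precisely the maximal submonoids of $X$ --- and the lemma is used in Example 4.2 via the statement ``such a context exists because the maximal submonoids of $X$ are the contexts and by Lemma 4.1 $X_\text{GHZ}$ is a monoid.'' That application requires $X_\psi$ to be a set of \emph{pairwise commuting} measurements, so that it sits inside a single context. Your proof only establishes closure under the product \emph{where it happens to be defined}, i.e.\ that $X_\psi$ is a sub-\emph{partial}-monoid; you explicitly decline to say anything about pairs $M_1, M_2 \in X_\psi$ that do not commute. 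That weaker statement is essentially content-free and does not support the use the paper makes of the lemma.

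The missing idea, which is the actual content of the lemma, is that membership in $X_\psi$ \emph{forces} commutativity. Any two elements of $P_n$ with real sign either commute or anti-commute, since the single-qubit Paulis pairwise anti-commute. If $M_1, M_2 \in X_\psi$ anti-commuted, then writing $M_j\ket{\psi} = \epsilon_j\ket{\psi}$ with $\epsilon_j \in \{1,-1\}$ one gets
\[
\epsilon_1\epsilon_2\ket{\psi} = M_1 M_2 \ket{\psi} = -M_2 M_1 \ket{\psi} = -\epsilon_1\epsilon_2\ket{\psi},
\]
which is impossible for $\ket{\psi} \neq 0$. Hence all elements of $X_\psi$ commute, every product $M_1M_2$ with $M_1,M_2 \in X_\psi$ is defined and lies in $X$ by the closure hypothesis, and your eigenvalue computation $(M_1M_2)\ket{\psi} = \epsilon_1\epsilon_2\ket{\psi}$ then correctly finishes the argument. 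So the computational part of your proof is fine; what is absent is the anti-commutation dichotomy that turns $X_\psi$ from a sub-partial-monoid into a genuine commutative monoid.
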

\begin{proof}
    Because the Pauli measurements
    $\pauli{X}{}, \pauli{Y}{}, \pauli{Z}{}$ pairwise \emph{anti-commute}
    \[
        \pauli{X}{}\pauli{Z}{} = -\pauli{Z}{}\pauli{X}{}, \quad
        \pauli{X}{}\pauli{Y}{} = -\pauli{Y}{}\pauli{X}{}, \quad
        \pauli{Y}{}\pauli{Z}{} = -\pauli{Z}{}\pauli{X}{} \]
    all $M_1,M_2 \in X$ either commute, or anti-commute. The condition
    that $M_1,M_2 \in X_\psi$ forces the former.
    Furthermore note
    that $X_\psi$ contains $I$ and is closed under products.
\end{proof}

\begin{exmp}[GHZ]
\label{example2}
Let $X := \bigotimes_{i=1}^3 \pm \{\pauli{X}{}, \pauli{Y}{}, \pauli{Z}{}, I\}$. First note
that the state dependent model
$\pmodel_{X,\text{GHZ}}:(X, \mcvx, \zn{2})$ is an instance of Definition 4.1 because $X$
is closed under commuting products and contains $\pm I$.
Next, consider the set $X_\text{GHZ}$ of measurements whose outcome is uniquely determined
by $\ghz$ and observe that
equations (7-10) in Example 2.2
are all of the form
\[
    M_1 \oplus M_2 \oplus M_3 = s_{\text{GHZ}}(M_1M_2M_3) \]
where 
$M_1,M_2,M_3 \in X$ are compatible, $M_1M_2M_3 \in X_{\text{GHZ}}$,
and $s_{\text{GHZ}}(M_1M_2M_3)$ is the unique outcome that is consistent with $\ghz$.
That the equations are mutually inconsistent therefore ensures that there is no
global splitting $g:X \to \zn{2}$ whose restriction to $X_\text{GHZ}$ is
$s_\text{GHZ}$.
It follows that if $C \in \mcvx$ is any context 
that contains $X_{\text{GHZ}}$ then $[\beta_s] \neq 0$ for every 
$s \in \pmodel_{X, \text{GHZ}}(C)$.
Note that such a context exists because the maximal submonoids of $X$ are the contexts
and by Lemma 4.1 $X_\text{GHZ}$ is a monoid.
\end{exmp}

\paragraph{Comparison with \v{C}ech cohomology.} 
The \v{C}ech cohomology obstruction is defined for
all empirical models, but this generality comes at a price. It is not
a complete characterisation of contextuality.
It is therefore natural to ask if there are any examples of contextuality
that is detected by group cohomology, but not \v{C}ech cohomology.
Because Mermin's square and GHZ are examples of all versus nothing arguments we
know that \v{C}ech cohomology
detects contextuality in both cases. We now show more generally that 
if the group cohomology obstruction is non-trivial, then the \v{C}ech cohomology
obstruction is also non-trivial.

\begin{theorem}
    Let $s_0 \in \pmodel(C_0)$ be any local section. Then
    \[
        \gamma(1 \cdot s_0) = 0 \implies [\beta_{s_0}] = 0 \]
\end{theorem}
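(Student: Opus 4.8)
The plan is to unwind both obstructions into concrete statements about extending data from $C_0$ and then build a bridge between the \v{C}ech and group-cohomology worlds. By Lemma 2.3, the hypothesis $\gamma(1\cdot s_0)=0$ means precisely that $1\cdot s_0$ extends to a compatible family $\{\omega_C \in F_\ints\pmodel(C)\}_{C\in\mcvx}$ of the free-abelian-group-valued presheaf $F_\ints\pmodel$, with $\omega_{C_0}=1\cdot s_0$. Each $\omega_C$ is a formal $\ints$-linear combination $\sum_k n_k^C\,(1\cdot t_k^C)$ of genuine local sections $t_k^C\in\pmodel(C)$ whose coefficients sum to $1$ (since the coefficient-sum is itself a morphism to $\ints$ that is preserved by restriction and equals $1$ on $s_0$). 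On the group-cohomology side, by Theorem 3.1 it suffices to produce a global trivialisation of sequence~\eqref{global} extending $\langle s_0,\pi\rangle$, equivalently by the Splitting Lemma a global left splitting $g:X\to A$ with $\res{g}{C_0}=s_0$. The strategy is to manufacture such a $g$ out of the compatible family.

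\textbf{Building a global splitting from the compatible family.}
First I would use the fact that every $t_k^C\in\pmodel(C)$ is itself a left splitting (Definition 4.1(4)) of the sequence $A\xrightarrow{i}C\xrightarrow{\pi}C/A$, so each is a function $C\to A$ with $t_k^C\circ i=\id_A$ and $t_k^C(x+y)=t_k^C(x)+t_k^C(y)$. Define, for each context $C$, the $A$-valued function
\[
    \bar\omega_C \;:=\; \sum_k n_k^C\, t_k^C : C \to A,
\]
a $\ints$-linear combination taken pointwise in the abelian group $A$. Because the coefficients $n_k^C$ sum to $1$ and each summand is additive with $t_k^C\circ i=\id_A$, the combination $\bar\omega_C$ is again a \emph{homomorphism} $C\to A$ satisfying $\bar\omega_C\circ i=\id_A$, i.e. a left splitting of $A\to C\to C/A$; over $C_0$ it equals $s_0$. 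The compatibility condition $\res{\omega_C}{C\cap C'}=\res{\omega_{C'}}{C\cap C'}$ in $F_\ints\pmodel$ descends, after applying this linear evaluation, to $\res{\bar\omega_C}{C\cap C'}=\res{\bar\omega_{C'}}{C\cap C'}$ as $A$-valued functions. Hence the family $\{\bar\omega_C\}_{C\in\mcvx}$ agrees on all overlaps and glues to a single well-defined function $g:X\to A$ with $\res{g}{C}=\bar\omega_C$ for every $C$.

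\textbf{Verifying $g$ is a left splitting and concluding.}
It remains to check that $g$ is a left splitting of the \emph{global} sequence~\eqref{global}, i.e. a partial-monoid homomorphism $X\to A$ with $g\circ i=\id_A$. Both conditions are local: any relation $x+y$ defined in $X$ occurs inside some maximal submonoid, hence inside some context $C\in\mcvx$ (this is exactly the ``glue together'' description of $(X,+,0)$), where $g$ restricts to the homomorphism $\bar\omega_C$, so $g(x+y)=g(x)+g(y)$; and $g\circ i=\id_A$ already holds on $C_0$. Therefore $g$ is a global left splitting with $\res{g}{C_0}=s_0$, and by the Splitting Lemma $\langle g,\pi\rangle:X\to A\times X/A$ is a global trivialisation extending $\langle s_0,\pi\rangle$. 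Theorem 3.1 then yields $[\beta_{s_0}]=0$, as required.

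\textbf{The main obstacle.}
The routine parts are the homomorphism checks; the delicate step is justifying that the pointwise linear evaluation $\omega_C\mapsto\bar\omega_C=\sum_k n_k^C t_k^C$ (i) really lands in left splittings — this is where the coefficient-sum being $1$ is essential, since an arbitrary $\ints$-combination of splittings need not satisfy $\bar\omega_C\circ i=\id_A$ — and (ii) genuinely commutes with restriction so that overlap-agreement in $F_\ints\pmodel$ transfers to overlap-agreement of the $A$-valued $\bar\omega_C$. Making (i) and (ii) precise amounts to exhibiting a natural transformation $F_\ints\pmodel\to\underline{A}$ over the ``sum-to-one'' affine sublevel and checking it is well-defined on sections; I expect this to be the only point requiring genuine care, and connectedness of $\mcvx$ (assumed in Lemma 2.3) guarantees the coefficient-sum is the constant $1$ across all contexts rather than merely locally.
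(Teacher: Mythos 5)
Your proof is correct and follows essentially the same route as the paper: invoke Lemma 2.3 to obtain a compatible family in $F_\ints\pmodel$ extending $1\cdot s_0$, observe that the coefficients must sum to $1$, collapse the formal combinations via the $\ints$-action on $A$ to a global function, and use that affine combinations of left splittings are left splittings to conclude via Theorem 3.1. The only cosmetic difference is that the paper establishes the coefficient-sum condition by restricting each $r_C$ directly to $C\cap C_0$ (nonempty since $i(A)\subset\bigcap_{C\in\mcvx}C$) rather than propagating it by connectedness.
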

\begin{proof}
    First note that it follows from 
    $i(A) \subset \bigcap_{C \in \mcvx} C$ that the cover $\mcvx$ is connected.
    Therefore, if $\gamma(1 \cdot s_0) = 0$ then there is some 
    compatible family
    $\{r_C \in F_\ints \pmodel(C)\}_{C \in \mcvx}$ such that
    $r_{C_0} = 1 \cdot s_0$. 
    Observe now that any such family in fact is a
    compatible family of
    formal affine combinations: For any $C \in \mcvx$
    \[
    \sum_{s \in \pmodel(C)} 
            r_C(s) \cdot \res{s}{C \cap C_0}
            =
        \res{r_C}{C \cap C_0} = 
        \res{r_{C_0}}{C \cap C_0} = 1 \cdot \res{s_0}{C \cap C_0}
        \]
    hence $\sum_{s \in \pmodel(C)} r_C(s) = 1$.
    We now use the unique module
    action\footnote{
    i.e. $0 \cdot a = 0$, and for $n \geq 1$: 
    $n \cdot a := a + a + \cdots + a$ ($n$ times) and $-n \cdot a = -(n \cdot a)$.
    }of $\ints$ on $A$ to collapse this
    compatible family to a function $g:X \to A$.
    \[
        g(x) := \sum_{s \in \pmodel(C)} r_C(s) \cdot s(x), \quad \text{where
        $C \in \mcvx$ is any context with $x \in C$}
        \]
    Because the set of splittings is closed under affine combinations
    this function is in fact a splitting which furthermore extends $s_0$.
    We therefore have $[\beta_{s_0}] = 0$.
\end{proof}
\section{Conclusion}
We have considered two different applications of cohomological techniques to
contextuality in MBQC. While the \v{C}ech
cohomology obstruction is defined for any set of quantum measurements, the group
cohomology obstruction relies on the specific algebraic 
structure of the  Pauli measurements. 
We have given an abstract account of this approach 
using the sheaf theoretic framework.
At this level of generality we observe that
although both approaches rely on structural assumptions to be complete, there
is a direct way in which the \v{C}ech cohomology obstruction subsumes the group
cohomology obstruction.

Our presentation of the group cohomology approach deviates from
Okay et al.'s in that we have defined a single obstruction that applies to
both state independent and state dependent contextuality. We have shown that this
obstruction detects contextuality in the state independent
case of Mermin's square and the state dependent case of GHZ.

\paragraph{Acknowledgements.} I would like to thank Samson Abramsky,
Rui Soares Barbosa and Giovanni Car\`u for their guidance and 
valuable discussions.
Support from the Aker Scholarship and 
Sparebank 1 Ringerike Hadeland's Talentstipend is also gratefully
acknowledged.
\nocite{*}
\bibliographystyle{eptcs}
\bibliography{references}
\end{document}